\documentclass[11pt]{article}

\usepackage{fullpage}
\usepackage{amsmath}
\usepackage{amssymb}
\usepackage{amsthm}

\DeclareMathOperator*{\argmin}{argmin}

\newcommand{\minus}[1]{\setminus\{#1\}}

\newtheorem{theorem}{Theorem}

\newtheorem{remark}[theorem]{Remark}

\begin{document}

\title{Similar Elements and Metric Labeling on Complete Graphs}
\author{Pedro F. Felzenszwalb \\
  Brown University \\
  Providence, RI, USA \\
  {\tt pff@brown.edu}} 
\maketitle

We consider a problem that involves finding similar elements in a
collection of sets.  The problem is motivated by applications in
machine learning and pattern recognition (see, e.g. \cite{MR98}).
Intuitively we would like to discover something in common among a
collection of sets, even when the sets have empty intersection.  A
solution involves selecting an element from each set such that the
selected elements are close to each other under an appropriate metric.
We formulate an optimization problem that captures this notion and
give an efficient approximation algorithm that finds a solution within
a factor of 2 of the optimal solution.

The similar elements problem is a special case of the metric labeling
problem defined in \cite{KT02} and we also give an efficient
2-approximation algorithm for the metric labeling problem on
\emph{complete graphs}.  Metric labeling on complete graphs
generalizes the similar elements problem to include costs for selecting
elements in each set.

The algorithms described here are similar to the ``center star'' method
for multiple sequence alignment described in \cite{G93}.

Beyond producing solutions with good theoretical guarantees, the
algorithms described here are also practical.  A version
of the algorithm for the similar elements problem has been implemented
and used to find objects in a collection of photographs
\cite{SS16}.

\section{Similar Elements}

Let $X$ be a (possibly infinite) set and $d$ be a metric on $X$.  Let
$S_1,\ldots,S_n$ be $n$ finite subsets of $X$.  The goal of the
\emph{similar elements} problem is to select an element from each set $S_i$
such that the selected elements are close to each other under the
metric $d$.  One motivation is for discovering something in common
among the sets $S_1,\ldots,S_n$ even when they have empty intersection.

We formalize the problem as the minimization of the sum
of pairwise distances among selected elements.  Let $x=(x_1,\ldots,x_n)$
with $x_i \in S_i$.  Define the similar elements objective as,
\begin{equation}
  c(x) = \sum_{1 \le i,j \le n} d(x_i,x_j).
\end{equation}
Let $x^* = \argmin_x c(x)$ be an optimal solution for the similar
elements problem.

Optimizing $c(x)$ appears to be difficult, but we can define easier
problems if we ignore some of the pairwise distances in the objective.
In particular we define $n$ different ``star-graph'' objective
functions as follows.  For each $1 \le r \le n$ define the objective
$c^r(x)$ to account only for the terms in $c(x)$ involving $x_r$,
\begin{equation}
  c^r(x) = \sum_{j \neq r} d(x_r,x_j).
\end{equation}

Let $x^r = \argmin_x c^r(x)$ be an optimal solution for the
optimization problem defined by $c^r(x)$.  We can compute $x^r$
efficiently using a simple form of dynamic programming, by first
computing $x^r_r$ and then computing $x^r_j$ for $j \neq r$.

\begin{equation}
  x^r_r = \argmin_{x_r \in S_r} \sum_{j \neq r} \min_{x_j \in S_j} d(x_r,x_j),
  \label{eqn:DP1similar}
\end{equation}
\begin{equation}
  x^r_j = \argmin_{x_j \in S_j} d(x^r_r,x_j).
  \label{eqn:DP2similar}
\end{equation}

Each of the $n$ ``star-graph'' objective functions leads to a possible
solution.  We then select from among the solutions $x^1,\ldots,x^n$
as follows,
\begin{eqnarray}
  \hat{r} & = & \argmin_{1 \le r \le n} c^r(x^r), \\
  \hat{x} & = & x^r.
\end{eqnarray}

\begin{theorem}
  The algorithm described above finds a 2-approximate solution for the similar elements
  problem.  That is,
  $$c(\hat{x}) \le 2 c(x^*).$$
\end{theorem}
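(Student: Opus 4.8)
The plan is the classical ``center star'' argument: bound $c(\hat x)$ from above using the triangle inequality through the chosen center $\hat r$, then bound the resulting quantity from below by $c(x^*)$ via an averaging argument over the $n$ candidate centers.

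First I would estimate $c(\hat x)$. Since $\hat x = x^{\hat r}$, write $c(\hat x)=\sum_{1\le i,j\le n} d(\hat x_i,\hat x_j)$ and apply the triangle inequality $d(\hat x_i,\hat x_j)\le d(\hat x_i,\hat x_{\hat r})+d(\hat x_{\hat r},\hat x_j)$ to every term. Each of the two resulting double sums collapses: $\sum_{i,j} d(\hat x_i,\hat x_{\hat r}) = n\sum_{i} d(\hat x_i,\hat x_{\hat r}) = n\,c^{\hat r}(\hat x)$, and symmetrically for the other sum (using $d(\hat x_{\hat r},\hat x_{\hat r})=0$ so the missing $j=\hat r$ term does not matter). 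Hence $c(\hat x)\le 2n\,c^{\hat r}(\hat x) = 2n\,c^{\hat r}(x^{\hat r})$.

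Next I would show $n\,c^{\hat r}(x^{\hat r})\le c(x^*)$. By the choice $\hat r=\argmin_r c^r(x^r)$ we have $c^{\hat r}(x^{\hat r})\le c^r(x^r)$ for every $r$, and by optimality of $x^r$ for the objective $c^r$ we have $c^r(x^r)\le c^r(x^*)$. Chaining these gives $c^{\hat r}(x^{\hat r})\le c^r(x^*)$ for all $r$; summing over $r=1,\ldots,n$ yields $n\,c^{\hat r}(x^{\hat r})\le \sum_{r} c^r(x^*) = \sum_{r}\sum_{j\ne r} d(x^*_r,x^*_j) = \sum_{1\le i,j\le n} d(x^*_i,x^*_j) = c(x^*)$, the last step again because the diagonal terms vanish. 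Combining the two displayed inequalities gives $c(\hat x)\le 2n\,c^{\hat r}(x^{\hat r})\le 2c(x^*)$.

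I do not expect a serious obstacle here; the argument is short once the two halves are set up. The only points requiring care are the bookkeeping of the factor $n$ produced when the double sums collapse (it must match the $n$ introduced by averaging over centers), the harmless treatment of the diagonal $d(x_i,x_i)=0$ terms that distinguish $c$ from $\sum_r c^r$, and the observation that $x^r$ being optimal for $c^r$ is exactly what licenses the comparison with $x^*$ restricted to the $r$-th star. If anything, I would double-check whether the objective $c(x)$ is intended with ordered pairs (as written) or unordered; this only changes constants uniformly and does not affect the ratio $2$.
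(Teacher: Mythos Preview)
Your proof is correct and follows essentially the same ``center star'' argument as the paper: the triangle-inequality bound $c(\hat x)\le 2n\,c^{\hat r}(\hat x)$ and the averaging bound $c^{\hat r}(x^{\hat r})\le \frac{1}{n}\sum_r c^r(x^*)=\frac{1}{n}c(x^*)$ are exactly the two ingredients the paper uses, in the same way. The only cosmetic differences are the order of presentation and that the paper phrases the averaging step as ``minimum $\le$ average'' rather than summing the per-$r$ inequality, which is equivalent.
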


\begin{proof}
First note that,
$$c(x) = \sum_{r=1}^n c^r(x).$$ 
Since the minimum of a set of values is at most the average, and $x^r$ minimizes $c^r(x)$,
$$\min_{1 \le r \le n} c^r(x^r) \le \frac{1}{n} \sum_{r=1}^n c^r(x^r) \le \frac{1}{n} \sum_{r=1}^n c^r(x^*) = \frac{1}{n} c(x^*).$$
By the triangle inequality we have 
$$c(x) = \sum_{1 \le i,j \le n} d(x_i,x_j) \le \sum_{1 \le i,j \le n} (d(x_i,x_r) + d(x_r,x_j)) = 2n \sum_{l=1}^n d(x_r,x_l) = 2nc^r(x).$$
Therefore
$$c(\hat{x}) \le 2nc^{\hat{r}}(\hat{x}) = 2n \min_{1 \le r \le n} c^r(x^r) \le 2c(x^*).$$
\end{proof}

To analyze the running time of the algorithm we assume
the distances $d(p,q)$ between pairs of elements in $S = S_1 \cup
\cdots \cup S_n$ are either pre-computed and given as part of the
input, or they can each be computed in $O(1)$ time.

Let $k = \max_{1 \le i \le n} |S_i|$.  The first stage of the
algorithm involves $n$ optimization problems that can be solved in
$O(nk^2)$ time each.  The second stage of the algorithm involves
selecting one of the $n$ solutions, and takes $O(n^2)$ time.

\begin{remark}
  If each of the sets $S_1,\ldots,S_n$ has size at most $k$ the
  running time of the approximation algorithm for the similar elements
  problem is $O(n^2k^2)$.
\end{remark}

The bottleneck of the algorithm is the evaluation of the
minimizations over $x_j \in S_j$ in (\ref{eqn:DP1similar}) and
(\ref{eqn:DP2similar}).  This computation is equivalent to a
nearest-neighbor computation, where we want to find a point from a set
$S \subseteq X$ that is closest to a query point $q \in X$.  When the
nearest-neighbor computation can be done efficiently (with an
appropriate data structure) the running time of the similar elements
approximation algorithm can be reduced.

\section{Metric Labeling on Complete Graphs}

Let $G=(V,E)$ be an undirected simple graph on $n$ nodes
$V=\{1,\ldots,n\}$.  Let $L$ be a finite set of labels with $|L|=k$
and $d$ be a metric on $L$.  For $i \in V$ let $m_i$ be a non-negative
function mapping labels to real values.  The \emph{unweighted metric
labeling} problem on $G$ is to find a labeling $x=(x_1,\ldots,x_n) \in
L^n$ minimizing
\begin{equation}
  c(x) = \sum_{i \in V} m_i(x_i) + \sum_{\{i,j\} \in E} d(x_i,x_j).
\end{equation}

Let $x^* = \argmin_x c(x)$.  This optimization problem can be solved
in polynomial time using dynamic programming if $G$ is a tree.  Here
we consider the case when $G$ is the \emph{complete graph} and give an
efficient 2-approximation algorithm based on the solution of several metric
labeling problems on star graphs.

For each $r \in V$ define a different objective function, $c^r(x)$,
corresponding to a metric labeling problem on a star graph with vertex
set $V$ rooted at $r$,
\begin{equation}
  c^r(x) = \sum_{i \in V} \frac{m_i(x_i)}{n} + \sum_{j \in V\minus{r}} \frac{d(x_r,x_j)}{2}.
\end{equation}

Let $x^r=\argmin_x c^r(x)$.
We can solve this optimization problem in $O(nk^2)$ time using a simple form of
dynamic programming.  First compute an
optimal label for the root vertex using one step of dynamic
programming,
\begin{equation}
  x^r_r = \argmin_{x_r \in L} \left( \frac{m_r(x_r)}{n} + \sum_{j \in V\minus{r}}  \min_{x_j \in L} \left( \frac{m_j(x_j)}{n} + \frac{d(x_r,x_j)}{2} \right) \right).
  \label{eqn:DP1metric}
\end{equation}
Then compute $x^r_j$ for $j \in V\minus{r}$,
\begin{equation}
  x^r_j = \argmin_{x_j \in L} \left( \frac{m_j(x_j)}{n} + \frac{d(x^r_r,x_j)}{2} \right).
  \label{eqn:DP2metric}
\end{equation}

Optimizing each $c^r(x)$ separately leads to $n$ possible solutions $x^1,\ldots,x^n$, and we
select one of them as follows,
\begin{eqnarray}
  \hat{r} & = & \argmin_{r \in V} c^r(x^r), \\
  \hat{x} & = & x^r.
\end{eqnarray}

\begin{theorem}
  The algorithm described above finds a 2-approximate solution for the metric
  labeling problem on a complete graph.  That is,
  $$c(\hat{x}) \le 2 c(x^*).$$
\end{theorem}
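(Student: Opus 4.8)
The plan is to follow the same three-step template used in the proof of the similar elements theorem, adding the bookkeeping needed to carry the node costs $m_i$ and the rescalings by $1/n$ and $1/2$ that appear in the definition of $c^r$. Throughout, $x^r$ is a true minimizer of $c^r$; note that the dynamic program in (\ref{eqn:DP1metric})--(\ref{eqn:DP2metric}) does compute such a minimizer, since once the root label $x_r$ is fixed the star objective decouples into independent minimizations over the leaves $j \in V\minus{r}$. The first step is the decomposition identity $\sum_{r \in V} c^r(x) = c(x)$: expanding, the term $\sum_{r}\sum_{i} m_i(x_i)/n$ collapses to $\sum_i m_i(x_i)$ because there are exactly $n$ choices of $r$, while $\sum_{r}\sum_{j \in V\minus{r}} d(x_r,x_j)/2$ visits every unordered pair $\{i,j\}$ exactly twice, each time with weight $1/2$, hence equals $\sum_{\{i,j\} \in E} d(x_i,x_j)$ because $G$ is complete; adding the two pieces gives $c(x)$.

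The second step is the averaging bound. Since each $x^r$ minimizes $c^r$ and the least of $n$ numbers is at most their average,
$$\min_{r \in V} c^r(x^r) \le \frac{1}{n}\sum_{r \in V} c^r(x^r) \le \frac{1}{n}\sum_{r\in V} c^r(x^*) = \frac{1}{n}\,c(x^*),$$
where the final equality applies the decomposition identity to the optimal labeling $x^*$.

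The third step is the pointwise inequality $c(x) \le 2n\,c^r(x)$, valid for every labeling $x$ and every root $r$. Writing $c(x) = \sum_{i} m_i(x_i) + \tfrac{1}{2}\sum_{i}\sum_{j\neq i} d(x_i,x_j)$ and applying $d(x_i,x_j) \le d(x_i,x_r) + d(x_r,x_j)$ term by term, the double sum is bounded by $2(n-1)\sum_{j\neq r} d(x_r,x_j)$, so $c(x) \le \sum_i m_i(x_i) + n\sum_{j\neq r} d(x_r,x_j)$. On the other hand $2n\,c^r(x) = 2\sum_i m_i(x_i) + n\sum_{j\neq r} d(x_r,x_j)$, so $2n\,c^r(x) - c(x) \ge \sum_i m_i(x_i) \ge 0$ because every $m_i$ is nonnegative; this gives $c(x) \le 2n\,c^r(x)$.

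Chaining the three steps yields $c(\hat x) = c(x^{\hat r}) \le 2n\,c^{\hat r}(x^{\hat r}) = 2n\min_{r\in V} c^r(x^r) \le 2n\cdot\tfrac1n\, c(x^*) = 2\,c(x^*)$. I expect the only point needing care — as opposed to genuine difficulty — to be the constant-chasing in the third step: one must confirm that the coefficients $1/n$ and $1/2$ built into $c^r$ are exactly calibrated so that the $n$-fold replication of the node terms and the factor-two blow-up produced by the triangle inequality are simultaneously absorbed by the single factor $2n$, and that nonnegativity of the node costs $m_i$ is precisely what prevents them from breaking the pointwise inequality.
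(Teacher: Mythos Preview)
Your proof is correct and follows essentially the same three-step template as the paper's own argument: the decomposition identity $\sum_r c^r(x)=c(x)$, the averaging bound giving $\min_r c^r(x^r)\le \tfrac1n c(x^*)$, and the pointwise bound $c(x)\le 2n\,c^r(x)$ via the triangle inequality together with nonnegativity of the $m_i$. The only cosmetic difference is that the paper sums over all of $V^2$ (picking up harmless zero diagonal terms and landing directly on the coefficient $n$), whereas you sum over $j\neq i$, obtain the slightly sharper coefficient $n-1$, and then relax it to $n$; the final inequality is identical.
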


\begin{proof}
First note that,
$$c(x) = \sum_{r=1}^n c^r(x).$$
Since the minimum of a set of values is at most the average, and $x^r$ minimizes $c^r(x)$,
$$\min_{1 \le r \le n} c^r(x^r) \le \frac{1}{n} \sum_{r=1}^n c^r(x^r) \le \frac{1}{n} \sum_{r=1}^n c^r(x^*) = \frac{1}{n} c(x^*).$$
Since $d$ is a metric and $m_i$ is non-negative,
\begin{eqnarray*}
  c(x) & = & \sum_{i \in V} m_i(x_i) + \sum_{\{i,j\} \in E} d(x_i,x_j) \\
  & = & \sum_{i \in V} m_i(x_i) + \sum_{(i,j) \in V^2} \frac{d(x_i,x_j)}{2} \\
  & \le & \sum_{i \in V} m_i(x_i) + \sum_{(i,j) \in V^2} \left( \frac{d(x_i,x_r)}{2} + \frac{d(x_r,x_j)}{2} \right) \\
%  & = & \sum_{i \in V} m_i(x_i) + \sum_{j \in V} \sum_{i \in V} \frac{d(x_i,x_r)}{2} + \sum_{i \in V} \sum_{j \in V} \frac{d(x_r,x_j)}{2} \\
  & = & \sum_{i \in V} m_i(x_i) + 2n\sum_{l \in V\minus{r}} \frac{d(x_r,x_l)}{2} \\
  & \le & 2n \sum_{i \in V} \frac{m_i(x_i)}{n} + 2n \sum_{l \in V\minus{r}} \frac{d(x_r,x_l)}{2} \\
  & = & 2nc^r(x).
\end{eqnarray*}
Therefore
$$c(\hat{x}) \le 2nc^{\hat{r}}(\hat{x}) = 2n \min_{1 \le r \le n} c^r(x^r) \le 2c(x^*).$$
\end{proof}

The first stage of the algorithm involves $n$ optimization problems
that can be solved in $O(nk^2)$ time each.  The second stage involves
selecting one of the $n$ solutions, and takes $O(n^2)$ time.

\begin{remark}
  The running time of the approximation algorithm for the metric
  labeling problem on complete graphs is $O(n^2k^2)$.
\end{remark}

\subsection*{Acknowledgments}

We thank Caroline Klivans, Sarah Sachs, Anna Grim, Robert Kleinberg
and Yang Yuan for helpful discussions about the contents of this
report.  This material is based upon work supported by the National
Science Foundation under Grant No. 1447413.

\bibliography{pairs}

\begin{thebibliography}{1}

\bibitem{G93}
Dan Gusfield.
\newblock Efficient methods for multiple sequence alignment with guaranteed
  error bounds.
\newblock {\em Bulletin of Mathematical Biology}, 55(1):141--154, 1993.

\bibitem{KT02}
Jon Kleinberg and Eva Tardos.
\newblock Approximation algorithms for classification problems with pairwise
  relationships: Metric labeling and markov random fields.
\newblock {\em Journal of the ACM}, 49(5):616--639, 2002.

\bibitem{MR98}
Oded Maron and Aparna~Lakshmi Ratan.
\newblock Multiple-instance learning for natural scene classification.
\newblock In {\em International Conference on Machine Learning}, volume~98,
  pages 341--349, 1998.

\bibitem{SS16}
Sarah Sachs.
\newblock Similar-part approximation using invariant feature descriptors.
\newblock Undergraduate Honors Thesis, Brown University, 2016.

\end{thebibliography}
\bibliographystyle{plain}

\end{document}